\newtheorem{Thm}{Theorem}
\newtheorem{Lem}{Lemma}
\newtheorem{Prop}{Proposition}
\theoremstyle{definition}
\theoremstyle{remark}
\newtheorem{Rem}{Remark}
\title{Propagation of boundary-induced discontinuity in stationary radiative transfer}
\author[1]{I-Kun Chen\thanks{ikun.chen@gmail.com} }
\author[2]{Daisuke Kawagoe\thanks{d.kawagoe@acs.i.kyoto-u.ac.jp}}
\affil[1]{Department of Mathematics, National Taiwan University}
\affil[2]{Graduate School of Informatics, Kyoto University}
\begin{document}

\maketitle

\begin{abstract}
We consider the boundary value problem of the stationary transport equation in the slab domain of general dimensions. In this paper, we discuss the relation between discontinuity of the incoming boundary data and that of the solution to the stationary transport equation. We introduce two conditions posed on the boundary data so that discontinuity of the boundary data propagates along positive characteristic lines as that of the solution to the stationary transport equation. Our analysis does not depend on the celebrated velocity averaging lemma, which is different from previous works. We also introduce an example in two dimensional case which shows that piecewise continuity of the boundary data is not a sufficient condition for the main result.
\end{abstract}

\section{Introduction}

We consider the stationary transport equation:
\begin{align} \label{eq:STE}
&&\xi \cdot \nabla_x f(x, \xi) + \mu_t(x) f(x, \xi) = \mu_s(x) \int_{S^{d-1}} p(x, \xi, \xi^\prime) f(x, \xi^\prime)\,d\sigma_{\xi^\prime}, \nonumber\\
&&(x, \xi) \in \Omega \times S^{d-1},
\end{align}
where $\Omega$ is the slab domain $\mathbb{R}^{d-1} \times (0, 1)$ of the dimension $d \geq 2$ and $S^{d-1}$ is the unit sphere in $\mathbb{R}^d$. The stationary transport equation describes propagation of particles interacting only with the media, for example neutron~\cite{book-fullCase} or photon~\cite{book-fullChan}. The function $f(x, \xi)$ describes the density of particles at the point $x \in \Omega$ with direction $\xi \in S^{d-1}$. The coefficient $\mu_t$ and the product $\mu_s p$ characterize the effect of the media; they are called the attenuation coefficient and the scattering indicatrix, respectively. 

In this paper, we assume that two coefficients $\mu_t$ and $\mu_s$ are nonnegative bounded continuous functions on $\Omega$ satisfying
\begin{equation*}
\inf_{x \in \Omega} \Bigl( \mu_t(x) - \mu_s(x) \Bigr) >0.
\end{equation*}
We note that, from the assumption above, we have 
\begin{equation*}
\underline{\mu_t} := \inf_{x \in \Omega} \mu_t(x) > 0
\end{equation*}
and
\begin{equation*}
M := \sup_{x \in \Omega} \left( \dfrac{\mu_s(x)}{\mu_t(x)} \right) < 1.
\end{equation*}
We also assume that the integral kernel $p$ is a nonnegative bounded continuous function on $\Omega \times S^{d-1} \times S^{d-1}$ which satisfies
\begin{equation*}
\int_{S^{d-1}} p(x, \xi, \xi^\prime)\,d\sigma_{\xi^\prime} = 1
\end{equation*}
for all $(x, \xi) \in \Omega \times S^{d-1}$. We regard the directional derivative $\xi \cdot \nabla_x f(x, \xi)$ as 
\begin{equation*}
\xi \cdot \nabla_x f(x, \xi) := \frac{d}{dt} f(x + t\xi, \xi) |_{t = 0}
\end{equation*}
and the measure $d\sigma_{\xi^\prime}$ is the Lebesgue measure on the sphere $S^{d-1}$.

We pose the incoming boundary condition as follows. Let 
\begin{equation*}
\Gamma_- := \{ (x, \xi) \in \partial \Omega \times S^{d-1} | n(x) \cdot \xi < 0 \},
\end{equation*}
where $n(x)$ is the outer normal vector at $x \in \partial \Omega$. Then, for a given function $f_0$ on $\Gamma_-$, a solution $f$ to the stationary transport equation (\ref{eq:STE}) must satisfy
\begin{equation} \label{eq:BC}
f(x, \xi) = f_0(x, \xi), \quad (x, \xi) \in \Gamma_-.
\end{equation}

In this paper, we discuss the relation between discontinuity of the boundary data and that of the solution to the stationary transport equaton. Aoki et al.~\cite{RefAoki} emphasize importance and significance of this analysis.

We introduce some notations. Let 
\begin{equation*}
X := (\Omega \times S^{d-1}) \cup \Gamma_-.
\end{equation*}
We introduce two functions $\tau_\pm$ on $X$ defined by 
\begin{equation*}
\tau_{\pm}(x, \xi) := \inf \{ t > 0 | x \pm t\xi \not\in \Omega \}.
\end{equation*}
Let
\begin{equation*}
S^{d-1}_{\pm} := \{\xi = (\xi_1, \xi_2, \ldots, \xi_d) \in S^{d-1}| \xi_d \gtrless 0\}
\end{equation*}
and let $\Gamma_{-, \xi}$ and $\Gamma_{-, x}$ be projections of $\Gamma_-$ on $\partial \Omega$ and $S^{d-1}_\pm$ respectively, that is,
\begin{equation*}
\Gamma_{-, \xi} := \{ x \in \partial \Omega | n(x) \cdot \xi < 0 \}, \quad \xi \in S^{d-1}_\pm
\end{equation*}
and
\begin{equation*}
\Gamma_{-, x} := \{ \xi \in S^{d-1} | n(x) \cdot \xi < 0 \}, \quad x \in \partial \Omega.
\end{equation*}
At last, let $disc(f)$ and $disc(f_0)$ be the set of discontinuous points for a function $f$ on $X$ and for a function $f_0$ on $\Gamma_-$, respectively.

The main result in this paper is as follows:
\begin{Thm} \label{thm:MR1}
Suppose that a boundary data $f_0$ is bounded and that it satisfies at least one of the following two conditions:
\begin{enumerate}
\item $f_0(\cdot, \xi)$ is continuous on $\Gamma_{-, \xi}$ for almost all $\xi \in S^{d-1}_\pm$,
\item $f_0(x, \cdot)$ is continuous on $\Gamma_{-, x}$ for almost all $x \in \partial \Omega$.
\end{enumerate}
Then, there exists a unique solution $f$ to the boundary value problem $(\ref{eq:STE})$-$(\ref{eq:BC})$ and the following relation holds:
\begin{equation*}
disc(f) = \{ (x_* + t\xi_*, \xi_*) | (x_*, \xi_*) \in disc(f_0), 0 \leq t < \tau_+(x_*, \xi_*) \}. 
\end{equation*}
\end{Thm}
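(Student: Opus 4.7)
The plan is to pass to the mild (integral) formulation along characteristics. Integrating~(\ref{eq:STE}) along $r \mapsto x - r\xi$ with boundary condition~(\ref{eq:BC}) yields the fixed-point equation $f = J f_0 + S f$, where
\[
(J f_0)(x,\xi) := \exp\Bigl(-\int_0^{\tau_-(x,\xi)} \mu_t(x - r\xi)\,dr\Bigr) f_0\bigl(x - \tau_-(x,\xi)\xi,\, \xi\bigr)
\]
is the direct attenuated transport of the boundary datum, and
\[
(Sg)(x,\xi) := \int_0^{\tau_-(x,\xi)} e^{-\int_0^s \mu_t(x - r\xi)\,dr}\mu_s(x - s\xi)\int_{S^{d-1}} p(x-s\xi,\xi,\xi')\,g(x-s\xi,\xi')\,d\sigma_{\xi'}\,ds
\]
is the collision contribution. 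Because $\int_{S^{d-1}} p\,d\sigma_{\xi'} = 1$ and $\mu_s \le M\mu_t$ with $M < 1$, a direct estimate gives $\|S g\|_\infty \le M \|g\|_\infty$, so boundedness of $f_0$ produces a unique solution $f = \sum_{n \ge 0} S^n(J f_0)$ whose Neumann series converges uniformly on $X$.

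Next I would identify $disc(J f_0)$. The backward footpoint map $(x,\xi) \mapsto (x - \tau_-(x,\xi)\xi, \xi)$ is continuous from $X$ to $\Gamma_-$ (with $\tau_- = 0$ on $\Gamma_-$), and the attenuation factor is continuous on $X$, so $Jf_0$ can only inherit discontinuities through $f_0$ pulled back by this map. The preimage of $disc(f_0)$ is precisely
\[
D := \{(x_* + t\xi_*, \xi_*) : (x_*,\xi_*) \in disc(f_0),\ 0 \le t < \tau_+(x_*,\xi_*)\},
\]
so $disc(Jf_0) = D$. It therefore suffices to show that every iterate $S^n(J f_0)$ with $n \ge 1$ is continuous on $X$; uniform convergence of the Neumann series then forces $disc(f) \subseteq D$.

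The key and hardest step is the continuity of the first corrector $S(J f_0)$, since $J f_0$ is itself discontinuous on $D$. Writing the inner spherical integral as $I(y,\xi) := \int_{S^{d-1}} p(y,\xi,\xi')\,(J f_0)(y,\xi')\,d\sigma_{\xi'}$ and noting $(Jf_0)(y,\xi') = e^{-\int_0^{\tau_-(y,\xi')} \mu_t}\,f_0(z(y,\xi'),\xi')$ with $z(y,\xi') := y - \tau_-(y,\xi')\xi' \in \partial\Omega$, my plan is as follows. Under condition~(1), the set of $\xi'$ for which $f_0(\cdot,\xi')$ is discontinuous on $\Gamma_{-,\xi'}$ is null in $S^{d-1}_\pm$, so for a.e.\ $\xi'$ the function $(Jf_0)(\cdot,\xi')$ is continuous on $\Omega$; the integrand of $I$ then converges pointwise a.e.\ along $(x_k,\xi_k) \to (x_0,\xi_0)$, and the dominated convergence theorem (with $f_0$ bounded) yields continuity of $I$ and hence of $S(Jf_0)$. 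Under condition~(2), I would change variable $\xi' \mapsto z := z(y,\xi')$, recasting $I$ as an integral over $\partial\Omega$ with a smooth Jacobian and kernel times $f_0\bigl(z, \xi'(y,z)\bigr)$ where $\xi'(y,z) = (y-z)/|y-z|$; since $f_0(z,\cdot)$ is continuous on $\Gamma_{-,z}$ for a.e.\ $z$ and $\xi'(y,z) \to \xi'(y_0,z)$ within $\Gamma_{-,z}$ as $y \to y_0$, dominated convergence again yields the claim. For $n \ge 2$, $S^n(Jf_0)$ is continuous by a trivial induction, since $S$ maps bounded continuous functions on $X$ to continuous functions by ordinary dominated convergence.

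Finally I would establish the reverse inclusion $D \subseteq disc(f)$, which rests on the strict positivity of the attenuation factor. Given $(x_*,\xi_*) \in disc(f_0)$ and $0 \le t < \tau_+(x_*,\xi_*)$, choose two sequences $(y_k^\pm,\eta_k^\pm) \to (x_*,\xi_*)$ in $\Gamma_-$ along which $f_0$ has distinct limits, and lift them to $(y_k^\pm + t\eta_k^\pm,\eta_k^\pm) \to (x_* + t\xi_*,\xi_*)$ in $X$; the explicit formula for $Jf_0$ shows the two limits differ by a strictly positive multiple of the jump of $f_0$, while $f - Jf_0$ has a common limit by the previous paragraph. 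The chief obstacle throughout is the continuity of $S(Jf_0)$ in paragraph~3, where the absence of velocity-averaging forces a direct measure-theoretic treatment tailored separately to each hypothesis, controlling how the exceptional directions or footpoints sit inside the characteristic integration.
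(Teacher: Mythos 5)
Your proposal follows essentially the same route as the paper: the mild formulation $f = Jf_0 + Sf$ with the contraction estimate $\|Sg\|_\infty \le M \|g\|_\infty$, the Neumann series, the identification of $disc(Jf_0)$ with the set $D$ of forward characteristics emanating from $disc(f_0)$, the key continuity of the velocity average of $Jf_0$ treated separately under condition 1 (dominated convergence in $\xi'$) and condition 2 (change of variables to the boundary footpoint with Jacobian $x_d/|x-y_0|^d$, then dominated convergence), the induction for $S^n(Jf_0)$ with $n \ge 2$, and the strict positivity of the attenuation factor for the reverse inclusion $D \subseteq disc(f)$. All of this matches the paper's Sections 2 and 3; your operator $I$ is the function $G$ of Lemma \ref{lem:G1}.

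The one genuine gap is that you stop at the integral equation and never verify that its fixed point is a solution of the boundary value problem in the sense the theorem requires, namely that the directional derivative $\frac{d}{dt} f(x+t\xi,\xi)|_{t=0}$ exists at \emph{every} $(x,\xi) \in \Omega \times S^{d-1}$ and the transport equation holds pointwise there. The paper is explicit that this implication is not automatic: Section 4 is devoted to proving it, and Section 5 exhibits a piecewise continuous $f_0$ for which the integral equations have a unique bounded solution but the boundary value problem has \emph{no} solution, precisely because $\xi \cdot \nabla_x f^{(1)}$ fails to exist at one point. The missing verification is that $t \mapsto S(Jf_0)(x+t\xi,\xi) = \int_{-t}^{\tau_-(x,\xi)} \mu_s(x-s\xi) \exp\bigl(M_t(x,\xi;-t) - M_t(x,\xi;s)\bigr) G(x-s\xi,\xi)\,ds$ is differentiable at $t=0$ — which again hinges on the continuity of $G$ that you have already established — together with the analogous computation for the higher iterates and a termwise differentiation of the series justified by its uniform convergence. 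So the tools are all present in your proposal, but the step itself, and the fact that the two hypotheses are needed for it and not only for locating $disc(f)$, is absent.
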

Here, we call a bounded function $f$ on $X$ a solution to the boundary value problem (\ref{eq:STE})-(\ref{eq:BC}) if it satisfies the stationary transport equation (\ref{eq:STE}) for all $(x, \xi) \in \Omega \times S^{d-1}$ and the boundary condition (\ref{eq:BC}) for all $(x, \xi) \in \Gamma_-$. This theorem means that discontinuity of the boundary data propagates along positive characteristic lines as that of the solution to the stationary transport equation.

\begin{Rem}
If the boundary data $f_0$ is bounded continuous on $\Gamma_-$, then there exists a unique solution $f$. Moreover, since $disc(f_0)$ is the empty set by assumption, $disc(f)$ is also the empty set, which implies that the solution $f$ is also bounded continuous on $X$.
\end{Rem}

Anikonov et al.~\cite{RefAnik} have shown this property assuming condition 1 in a three dimensional bounded convex domain with piecewise continuous coefficients. They also assumed so-called general convexity to these pieces. They made use of this property to solve the inverse problem to determine the coefficient $\mu_t$ from the knowledge of the boundary measurements $f|_{\Gamma_+}$, where 
\begin{equation*}
\Gamma_+ := \{ (x, \xi) \in \partial \Omega \times S^{d-1} | n(x) \cdot \xi > 0 \}
\end{equation*}
and $n(x)$ is again the outer normal vector at $x \in \partial \Omega$. 

On the other hand, Aoki et al.~\cite{RefAoki} have shown the same property assuming condition 2 in the two dimensional half plane domain with $\mu_t = \mu_s = 1$ and $p = 1/2\pi$. We note that this assumption on coefficients violates ours because $\mu_t - \mu_s = 0$ for all $x \in \Omega$. They also assume that the boundary data is periodic with respect to the spacial variable and is independent of the directional variable. We have succeeded in modifying their assumptions. Their analysis is based on the celebrated velocity averaging lemma, which works in $L^2$-framework only when $p$ is constant. On the other hand, $p$ is not constant in our setting. So, we cannot apply their approach directly. However, we overcome this point by $L^\infty$-based discussion.

The ingredient of the rest part in this paper is as follows. In Section 2, we derive integral equations from the boundary value problem, and we show existence and uniqueness of solutions to derived integral equations. In Section 3, we discuss regularity of the solution to integral equations. Especially, we decompose the solution into two parts, the discontinuous part and the continuous part. In Section 4, we check the equivalence between the boundary value problem and integral equations. In other words, we prove that the solution to integral equations indeed satisfies the stationary transport equation under the assumption in Theorem \ref{thm:MR1}. In Section 5, we introduce an example in two dimensional case which shows that piecewise continuity of the boundary data is not a sufficient condition for the main result.

\section{Existence and uniqueness of solutions to the stationary transport equation}
In this section, we derive integral equations from the boundary value problem, and we show existence and uniqueness of solutions to derived integral equations.

For $(x, \xi) \in X$, by integrating the stationary transport equation (\ref{eq:STE}) from $x$ along the negative characteristic line $\{x - t\xi | t > 0\}$ until the line touches the boundary $\partial \Omega$ and by the boundary condition (\ref{eq:BC}) into consideration, we obtain the following integral equations: when $\xi_d \neq 0$,
\begin{align} \label{eq:IE}
f(x, \xi) =& \exp \Bigl(- M_t \bigl(x, \xi; \tau_-\left( x, \xi \right) \bigr) \Bigr) f_0 \bigl( x-\tau_-(x, \xi)\xi, \xi \bigr) \nonumber\\
&+ \int_0^{\tau_-(x, \xi)} \mu_s(x - s\xi) \exp \Bigl(- M_t (x, \xi; s) \Bigr) \nonumber\\
&\quad \times \int_{S^{d-1}} p(x - s\xi, \xi, \xi^\prime) f(x - s\xi, \xi^\prime)\,d\sigma_{\xi^\prime}ds,
\end{align}
and when $\xi_d = 0$,
\begin{align} \label{eq:IE2}
f(x, \xi) =& \int_0^\infty \mu_s(x - s\xi) \exp \Bigl(- M_t (x, \xi; s) \Bigr)\nonumber \\
&\quad \times \int_{S^{d-1}} p(x - s\xi, \xi, \xi^\prime) f(x - s\xi, \xi^\prime)\,d\sigma_{\xi^\prime}ds,
\end{align} 
where
\begin{equation*}
M_t (x, \xi; s) := \int_0^s \mu_t(x - r\xi)\,dr.
\end{equation*}
We call a bounded function $f$ on $X$ a solution to integral equations (\ref{eq:IE})-(\ref{eq:IE2}) if it satisfies integral equations (\ref{eq:IE})-(\ref{eq:IE2}) for all $(x, \xi) \in X$.

We note that, although solutions to the boundary value problem (\ref{eq:STE})-(\ref{eq:BC}) satisfy integral equations (\ref{eq:IE})-(\ref{eq:IE2}), the converse does not hold in general. However, as we will see later in Section 4, under the assumption in Theorem \ref{thm:MR1}, the solution to integral equations (\ref{eq:IE})-(\ref{eq:IE2}) is also the solution to the boundary value problem (\ref{eq:STE})-(\ref{eq:BC}). Therefore, our current task is to find a solution to integral equations (\ref{eq:IE})-(\ref{eq:IE2}).

\begin{Prop}
The solution to integral equations $(\ref{eq:IE})$-$(\ref{eq:IE2})$ is unique, if it exists.
\end{Prop}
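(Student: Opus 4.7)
The plan is to argue by contradiction using a sup-norm contraction estimate on the difference of two solutions. Suppose $f_1$ and $f_2$ are both bounded solutions to (\ref{eq:IE})-(\ref{eq:IE2}) with the same boundary data $f_0$. Setting $g := f_1 - f_2$, the boundary terms cancel and $g$ satisfies the homogeneous version of the integral equations for all $(x,\xi) \in X$, namely
\begin{equation*}
g(x,\xi) = \int_0^{\tau_-(x,\xi)} \mu_s(x-s\xi) \exp\bigl(-M_t(x,\xi;s)\bigr) \int_{S^{d-1}} p(x-s\xi,\xi,\xi') g(x-s\xi,\xi')\, d\sigma_{\xi'}\, ds,
\end{equation*}
with the convention $\tau_-(x,\xi) = \infty$ when $\xi_d = 0$, and $\tau_-(x,\xi) = 0$ when $(x,\xi) \in \Gamma_-$ (so $g = 0$ on $\Gamma_-$ automatically).

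Next I would use the three structural hypotheses on the coefficients: the normalization $\int_{S^{d-1}} p(x,\xi,\xi')\, d\sigma_{\xi'} = 1$, the bound $\mu_s(x) \leq M \mu_t(x)$ with $M < 1$, and the positivity $\mu_t \geq \underline{\mu_t} > 0$. Taking absolute values and bounding the inner spherical integral by $\|g\|_\infty$, one obtains
\begin{equation*}
|g(x,\xi)| \leq M \|g\|_\infty \int_0^{\tau_-(x,\xi)} \mu_t(x-s\xi) \exp\bigl(-M_t(x,\xi;s)\bigr)\, ds.
\end{equation*}
The key observation is that the integrand is exactly $-\frac{d}{ds}\exp(-M_t(x,\xi;s))$, so the integral evaluates to $1 - \exp\bigl(-M_t(x,\xi;\tau_-(x,\xi))\bigr) \leq 1$. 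In the tangential case $\xi_d = 0$, the lower bound $\underline{\mu_t} > 0$ guarantees $M_t(x,\xi;s) \to \infty$ as $s \to \infty$, so the same bound of $1$ holds in the limit.

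Combining these estimates gives $|g(x,\xi)| \leq M \|g\|_\infty$ pointwise on $X$, hence $\|g\|_\infty \leq M \|g\|_\infty$. Since $M < 1$ and $g$ is bounded by hypothesis, this forces $\|g\|_\infty = 0$ and therefore $f_1 = f_2$. There is no serious obstacle here: the only subtle point is to handle the $\xi_d = 0$ case (where $\tau_- = \infty$) on equal footing with the $\xi_d \neq 0$ case, which is resolved precisely by the assumption $\underline{\mu_t} > 0$ ensuring integrability of $\mu_t \exp(-M_t)$ on $[0,\infty)$.
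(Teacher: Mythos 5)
Your proof is correct and follows essentially the same route as the paper: take the difference of two solutions, use the normalization of $p$ together with $\mu_s \leq M\mu_t$ to recognize the integrand as $-\frac{d}{ds}\exp(-M_t(x,\xi;s))$, bound the resulting integral by $1$, and conclude from the contraction inequality $\|g\|_\infty \leq M\|g\|_\infty$ with $M<1$. Your explicit treatment of the tangential case $\xi_d=0$ via $\underline{\mu_t}>0$ is a point the paper handles only implicitly, but the argument is the same.
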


\begin{proof}
Let $f_1$ and $f_2$ be two solutions to integral equations (\ref{eq:IE})-(\ref{eq:IE2}). Then the difference $\tilde{f} := f_1 - f_2$ is also bounded on $X$ and satisfies the following integral equation:
\begin{align*}
\tilde{f}(x, \xi) =& \int_0^{\tau_-(x, \xi)} \mu_s(x - s\xi) \exp \Bigl(- M_t (x, \xi; s) \Bigr)\\
& \quad \times \int_{S^{d-1}} p(x, \xi, \xi^\prime) \tilde{f} (x - s\xi, \xi^\prime)\,d\sigma_{\xi^\prime}ds
\end{align*}
for all $(x, \xi) \in X$. Then, we have
\begin{align*}
|\tilde{f}(x, \xi)| \leq& \left( \sup_{(x, \xi) \in X} | \tilde{f} (x, \xi)| \right) \int_0^{\tau_-(x, \xi)} \mu_s(x - s\xi) \exp \Bigl(- M_t (x, \xi; s) \Bigr)ds\\
\leq& \left( \sup_{(x, \xi) \in X} | \tilde{f} (x, \xi)| \right) \int_0^\infty \dfrac{\mu_s(x - s\xi)}{\mu_t(x - s\xi)} \dfrac{d}{ds} \exp \Bigl(- M_t (x, \xi; s) \Bigr)\,ds\\
\leq& M \left( \sup_{(x, \xi) \in X} | \tilde{f} (x, \xi)| \right)
\end{align*}
for all $(x, \xi) \in X$. We emphasize that the supremum in this paper is not the essential supremum, which enables us to justify the pointwise discussion. Therefore,
\begin{equation}
\sup_{(x, \xi) \in X} | \tilde{f} (x, \xi)| \leq M \left( \sup_{(x, \xi) \in X} | \tilde{f} (x, \xi)| \right). \label{ineq:U}
\end{equation}
Since $M < 1$, the inequality (\ref{ineq:U}) implies $\sup_{(x, \xi) \in X} | \tilde{f} (x, \xi)| = 0$, that is, $f_1 = f_2$ for all $(x, \xi) \in X$.
\end{proof}

At last, we prove existence of a solution by iteration. This strategy is standard in the field of radiative transfer. For example, see Anikonov et al.~\cite{RefAnik}. Let us define a family of functions $\{f^{(n)}\}_{n \geq 0}$ on $X$ as follows:
\begin{equation} \label{eq:F0}
f^{(0)}(x, \xi) :=
\begin{cases}
\exp \Bigl(- M_t \bigl(x, \xi; \tau_- (x, \xi) \bigr) \Bigr) f_0(x-\tau_-(x, \xi)\xi, \xi), &\xi_d \neq 0, \\
0, &\xi_d = 0,
\end{cases}
\end{equation}
and
\begin{multline} \label{eq:F1}
f^{(n+1)}(x, \xi) := \int_0^{\tau_-(x, \xi)} \mu_s(x - s\xi) \exp \Bigl(- M_t (x, \xi; s) \Bigr)\\
\times \int_{S^{d-1}} p(x - s\xi, \xi, \xi^\prime) f^{(n)}(x - s\xi, \xi^\prime)\,d\sigma_{\xi^\prime}ds.
\end{multline}

Now we prove that the sum $f := \sum_{n = 0}^\infty f^{(n)}$ is a solution to integral equations (\ref{eq:IE})-(\ref{eq:IE2}). To begin with, we prove that $f := \sum_{n = 0}^\infty f^{(n)}$ is indeed defined on $X$. Especially, we show the following two propositions.  

\begin{Prop} \label{prop:P1}
Suppose that the boundary data $f_0$ is bounded on $\Gamma_-$. Then, $f^{(n)}$ is also bounded on $X$ for all $n \geq 0$.
\end{Prop}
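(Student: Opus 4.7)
The plan is a straightforward induction on $n$, using two structural bounds already available: the normalization $\int_{S^{d-1}} p(x,\xi,\xi')\,d\sigma_{\xi'} = 1$ and the pointwise bound $\mu_s(x)/\mu_t(x) \leq M < 1$.

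For the base case $n=0$, I would observe that the exponential factor $\exp(-M_t(x,\xi;\tau_-(x,\xi)))$ is bounded by $1$ because $\mu_t \geq 0$, and $f_0$ is bounded by hypothesis, so the formula \eqref{eq:F0} immediately yields $\sup_X |f^{(0)}| \leq \sup_{\Gamma_-} |f_0|$. The case $\xi_d = 0$ gives $0$ and causes no trouble.

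For the inductive step, assume $f^{(n)}$ is bounded on $X$. Plugging the uniform bound $\|f^{(n)}\|_\infty := \sup_X |f^{(n)}|$ through the inner integral and using $\int_{S^{d-1}} p(x-s\xi,\xi,\xi')\,d\sigma_{\xi'} = 1$ collapses \eqref{eq:F1} to
\begin{equation*}
|f^{(n+1)}(x,\xi)| \leq \|f^{(n)}\|_\infty \int_0^{\tau_-(x,\xi)} \mu_s(x-s\xi) \exp\bigl(-M_t(x,\xi;s)\bigr)\,ds.
\end{equation*}
Exactly as in the uniqueness proof, I would bound $\mu_s \leq M \mu_t$ and recognize $\mu_t(x-s\xi)\exp(-M_t(x,\xi;s))$ as the derivative $\tfrac{d}{ds}\bigl[-\exp(-M_t(x,\xi;s))\bigr]$, so the integral is at most $M$. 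This yields $\|f^{(n+1)}\|_\infty \leq M \|f^{(n)}\|_\infty$, which not only closes the induction but also gives the geometric decay $\|f^{(n)}\|_\infty \leq M^n \|f_0\|_\infty$ that will later be needed to justify convergence of the series $\sum f^{(n)}$.

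There is no real obstacle here; the only subtle point is making sure the argument is uniform in $(x,\xi) \in X$, including on $\Gamma_-$ and on the grazing set $\{\xi_d = 0\}$. On $\Gamma_-$ we have $\tau_-(x,\xi) = 0$, so the integral in \eqref{eq:F1} vanishes and $f^{(n+1)} = 0$, which is harmless. On $\{\xi_d = 0\}$ the trajectory $x - s\xi$ may stay in $\Omega$ for all $s > 0$, but the infimum $\underline{\mu_t} > 0$ ensures that $\exp(-M_t(x,\xi;s)) \leq e^{-\underline{\mu_t} s}$ decays exponentially, so the integral over $[0,\tau_-(x,\xi)] \subseteq [0,\infty)$ is still controlled by $M$ via the same derivative trick; this legitimizes the uniform bound on all of $X$.
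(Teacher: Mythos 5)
Your proposal is correct and follows essentially the same route as the paper: induction on $n$, with the base case from $\mu_t \geq 0$ and the inductive step from $\int_{S^{d-1}} p\,d\sigma_{\xi'} = 1$ together with the bound $\int_0^{\tau_-(x,\xi)} \mu_s(x-s\xi)\exp(-M_t(x,\xi;s))\,ds \leq M$ obtained by the same derivative trick. The extra remarks on the geometric decay and the grazing set are consistent with what the paper does in Proposition \ref{prop:P2} and Proposition \ref{prop:F0}.
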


\begin{proof}
We use the induction on $n$. 

For $(x, \xi) \in \Omega \times S^{d-1}$ and $\xi_d = 0$, we have
\begin{equation*}
|f^{(0)}(x, \xi)| = 0 \leq \sup_{(x, \xi) \in \Gamma_-} | f_0 (x, \xi) |,
\end{equation*}
and for otherwise, we have
\begin{align*}
|f^{(0)}(x, \xi)| \leq& \exp \Bigl(- M_t \bigl(x, \xi; \tau_- (x, \xi) \bigr) \Bigr) |f_0(x-\tau_-(x, \xi)\xi, \xi)|\\
\leq& \sup_{(x, \xi) \in \Gamma_-} | f_0 (x, \xi) |.
\end{align*}
These estimates imply that $f^{(0)}$ is bounded on $X$.

Now, we assume that $f^{(n)}$ is bounded on $X$ for some $n \in \mathbb{N}$. Then,
\begin{align}
| f^{(n+1)} (x, \xi) | \leq& \int_0^{\tau_-(x, \xi)}  \mu_s(x - s\xi) \exp \Bigl(- M_t (x, \xi; s) \Bigr) \nonumber \\
&\quad \times \int_{S^{d-1}} p(x - s\xi, \xi, \xi^\prime) | f^{(n)}(x - s\xi, \xi^\prime) | \,d\sigma_{\xi^\prime}ds \nonumber \\
\leq& \left( \sup_{(x, \xi) \in X} | f^{(n)} (x, \xi) | \right) \int_0^{\tau_-(x, \xi)} \mu_s(x - s\xi) \exp \Bigl(- M_t (x, \xi; s) \Bigr)\,ds \nonumber \\
\leq& M \left( \sup_{(x, \xi) \in X} | f^{(n)} (x, \xi) | \right) \label{ineq:E}
\end{align}
for all $(x, \xi) \in X$. This inequality implies that $f^{(n+1)}$ is defined and bounded on $X$. This completes the proof.
\end{proof}

\begin{Prop} \label{prop:P2}
Suppose that the boundary data $f_0$ is bounded on $\Gamma_-$. Then, the sum $\sum_{n = 0}^\infty f^{(n)}(x, \xi)$ converges absolutely and uniformly on $X$.
\end{Prop}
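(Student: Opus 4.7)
The plan is to invoke the Weierstrass M-test, using the contraction estimate (\ref{ineq:E}) established inside the proof of Proposition \ref{prop:P1}.

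First, I would set $C_0 := \sup_{(x,\xi) \in \Gamma_-} |f_0(x,\xi)|$, which is finite by hypothesis, and observe from the case analysis in the proof of Proposition \ref{prop:P1} that
\begin{equation*}
\sup_{(x,\xi) \in X} |f^{(0)}(x,\xi)| \leq C_0.
\end{equation*}
Next, I would iterate the bound (\ref{ineq:E}), which states
\begin{equation*}
\sup_{(x,\xi) \in X} |f^{(n+1)}(x,\xi)| \leq M \sup_{(x,\xi) \in X} |f^{(n)}(x,\xi)|,
\end{equation*}
to conclude by a straightforward induction on $n$ that
\begin{equation*}
\sup_{(x,\xi) \in X} |f^{(n)}(x,\xi)| \leq M^n C_0
\end{equation*}
for every $n \geq 0$.

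Finally, since $M < 1$ by our standing assumption on the coefficients, the geometric series $\sum_{n=0}^\infty M^n C_0 = C_0/(1-M)$ converges. The Weierstrass M-test then yields absolute and uniform convergence of $\sum_{n=0}^\infty f^{(n)}(x,\xi)$ on $X$, as desired.

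There is no real obstacle here: all the analytic work has already been done in the derivation of (\ref{ineq:E}), which exhibits the iteration as a contraction with factor $M$ in the sup norm. The only point worth flagging is that, as emphasized earlier in the paper, the suprema involved are pointwise suprema rather than essential suprema, so the bound $\sup_X |f^{(n)}| \leq M^n C_0$ holds in the pointwise sense required to control the series uniformly on all of $X$.
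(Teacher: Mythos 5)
Your proposal is correct and follows essentially the same argument as the paper: both iterate the contraction estimate (\ref{ineq:E}) to obtain $\sup_X |f^{(n)}| \leq M^n \sup_{\Gamma_-}|f_0|$ and then sum the resulting geometric series. The paper simply carries out the comparison with the convergent majorant series directly rather than naming the Weierstrass M-test.
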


\begin{proof}
From the inequality (\ref{ineq:E}), we have
\begin{align*}
\sup_{(x, \xi) \in X} | f^{(n)} (x, \xi) | \leq& M \left( \sup_{(x, \xi) \in X} | f^{(n-1)} (x, \xi) | \right) \\
\leq& M^n \left( \sup_{(x, \xi) \in X} | f^{(0)}(x, \xi) | \right) \leq M^n \left( \sup_{(x, \xi) \in \Gamma_-} | f_0 (x, \xi) | \right)
\end{align*}
for all $n \geq 0$. Thus, 
\begin{align*}
\sum_{n = 0}^\infty |f^{(n)} (x, \xi)| \leq& \sum_{n = 0}^\infty \ \ \sup_{(x, \xi) \in X} | f^{(n)} (x, \xi) |\\
\leq& \sum_{n = 0}^\infty M^n \left( \sup_{(x, \xi) \in \Gamma_-} | f_0 (x, \xi) |\right) \\
=& \frac{1}{1 - M} \left( \sup_{(x, \xi) \in \Gamma_-} | f_0 (x, \xi) | \right) < \infty,
\end{align*}
which implies absolute and uniform convergence of the sum $\sum_{n = 0}^\infty f^{(n)}(x, \xi)$ on $X$.
\end{proof}

From Proposition \ref{prop:P1} and Proposition \ref{prop:P2}, the sum $f(x, \xi) = \sum_{n = 0}^\infty f^{(n)}(x, \xi)$ converges absolutely and uniformly on $X$ and satisfies
\begin{align*}
f(x, \xi) =& f^{(0)}(x, \xi) + \sum_{n = 0}^\infty f^{(n+1)}(x, \xi)\\
=& f^{(0)}(x, \xi) + \int_0^{\tau_-(x, \xi)} \mu_s(x - s\xi) \exp \Bigl(- M_t (x, \xi; s) \Bigr)\\
& \quad \times \int_{S^{d-1}} p(x - s\xi, \xi, \xi^\prime) \sum_{n = 0}^\infty f^{(n)}(x - s\xi, \xi^\prime)\,d\sigma_{\xi^\prime}ds\\
=& f^{(0)}(x, \xi) + \int_0^{\tau_-(x, \xi)} \mu_s(x - s\xi) \exp \Bigl(- M_t (x, \xi; s) \Bigr)\\
&\quad \times \int_{S^{d-1}} p(x - s\xi, \xi, \xi^\prime) f(x - s\xi, \xi^\prime)\,d\sigma_{\xi^\prime}ds
\end{align*}
for all $(x, \xi) \in X$, which is the pair of integral equations (\ref{eq:IE})-(\ref{eq:IE2}). Thus, the sum  $f(x, \xi) = \sum_{n = 0}^\infty f^{(n)}(x, \xi)$ is the solution to integral equations (\ref{eq:IE})-(\ref{eq:IE2}).

\section{Regularity of the solution}
In this section, we discuss regularity of the solution to integral equations (\ref{eq:IE})-(\ref{eq:IE2}). We decompose the solution $f$ into two parts as below: 
\begin{equation*}
f(x, \xi) = F_0(x, \xi) + F_1(x, \xi),
\end{equation*}
where
\begin{align*}
F_0(x, \xi) &:= f^{(0)}(x, \xi), \\
F_1(x, \xi) &:= \sum_{n = 1}^\infty f^{(n)}(x, \xi).
\end{align*}
From now, we observe discontinuity of $F_0$ and prove continuity of $F_1$. This decomposition is the main idea in our analysis.

\subsection{Discontinuity of $F_0$}
First, we prove the following proposition.
\begin{Prop} \label{prop:F0}
\begin{equation*}
disc(F_0) = \{(x_* + t\xi_*, \xi_*) | (x_*, \xi_*) \in disc(f_0), 0 \leq t < \tau_+(x_*, \xi_*) \}.
\end{equation*}
\end{Prop}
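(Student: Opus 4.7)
The plan is to prove the set equality by showing both inclusions, leveraging the closed-form expression $F_0(x,\xi) = \exp\bigl(-M_t(x,\xi;\tau_-(x,\xi))\bigr) f_0\bigl(x - \tau_-(x,\xi)\xi,\, \xi\bigr)$ when $\xi_d \neq 0$, together with the fact that in the slab geometry both $\tau_-$ and the entry-point map $(x,\xi) \mapsto x - \tau_-(x,\xi)\xi$ are continuous on $\{\xi_d \neq 0\}$, while $\tau_-(x,\xi) \to \infty$ as $\xi_d \to 0$ for $x$ in the interior.

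For the inclusion $\supseteq$, I fix $(x_*, \xi_*) \in disc(f_0)$ and $t \in [0, \tau_+(x_*, \xi_*))$ and set $x := x_* + t\xi_*$; since $(x_*, \xi_*) \in \Gamma_-$ we automatically have $\xi_{*,d} \neq 0$. I choose a sequence $(y_n, \eta_n) \in \Gamma_-$ with $(y_n, \eta_n) \to (x_*, \xi_*)$ and $f_0(y_n, \eta_n) \not\to f_0(x_*, \xi_*)$, and use boundedness of $f_0$ to pass to a subsequence along which $f_0(y_n, \eta_n) \to L$ with $L \neq f_0(x_*, \xi_*)$. Putting $x_n := y_n + t\eta_n$, the point $(x_n, \eta_n)$ lies in $X$ for $n$ large (in $\Omega \times S^{d-1}$ when $t > 0$, in $\Gamma_-$ when $t = 0$), satisfies $\tau_-(x_n, \eta_n) \to t$, and $x_n - \tau_-(x_n, \eta_n)\eta_n = y_n$. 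Hence $F_0(x_n, \eta_n) \to \exp(-M_t(x,\xi_*;t))\,L$, which differs from $F_0(x,\xi_*) = \exp(-M_t(x,\xi_*;t))\,f_0(x_*, \xi_*)$, proving $(x, \xi_*) \in disc(F_0)$.

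For the inclusion $\subseteq$, let $(x, \xi) \in disc(F_0)$. First I rule out $\xi_d = 0$: in that case $x \in \Omega$ with $x_d \in (0,1)$ and $F_0(x,\xi) = 0$, and any sequence $(x', \xi') \in X$ approaching $(x,\xi)$ with $\xi'_d \neq 0$ satisfies $\tau_-(x',\xi') \geq \min(x'_d, 1 - x'_d)/|\xi'_d| \to \infty$, so $|F_0(x', \xi')| \leq \exp(-\underline{\mu_t}\,\tau_-(x', \xi'))\,\sup_{\Gamma_-}|f_0| \to 0$, contradicting the assumed discontinuity. Hence $\xi_d \neq 0$; I set $t := \tau_-(x, \xi)$ and $x_* := x - t\xi$, so $(x_*, \xi) \in \Gamma_-$, $x = x_* + t\xi$, and $t \in [0, \tau_+(x_*, \xi))$. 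I finish by contradiction: if $f_0$ were continuous at $(x_*, \xi)$, then continuity of $\tau_-$ and of the entry-point map at $(x, \xi)$, together with continuity of the exponential factor, would force $F_0(x', \xi') \to F_0(x, \xi)$ for every sequence $(x', \xi') \to (x, \xi)$ in $X$, contradicting $(x, \xi) \in disc(F_0)$; therefore $(x_*, \xi) \in disc(f_0)$.

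The main obstacle I anticipate is the treatment of $\{\xi_d = 0\}$, where the piecewise definition of $F_0$ (being zero there) could in principle create spurious discontinuities; the saving grace is that $\tau_-$ blows up as $\xi_d \to 0$ so that the exponential damping overwhelms the bounded boundary data. Once this is settled, the rest is the geometric fact that on the slab $\tau_-$ and the map $(x,\xi) \mapsto x - \tau_-(x,\xi)\xi$ are continuous on $\{\xi_d \neq 0\}$, available from the explicit formulas $\tau_-(x,\xi) = x_d/\xi_d$ when $\xi_d > 0$ and $\tau_-(x,\xi) = (x_d - 1)/\xi_d$ when $\xi_d < 0$.
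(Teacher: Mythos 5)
Your proof is correct and follows essentially the same route as the paper's: it uses the explicit formula for $F_0$ together with the continuity of $\tau_-$ and of the entry-point map on $\{\xi_d \neq 0\}$, and handles $\xi_d = 0$ by the exponential damping $\exp(-\underline{\mu_t}\,\tau_-) \to 0$. The only difference is that you spell out both inclusions via explicit sequence arguments where the paper simply asserts the equivalence $(x,\xi) \in disc(F_0) \Leftrightarrow (x - \tau_-(x,\xi)\xi,\,\xi) \in disc(f_0)$, which is a welcome but not substantively different elaboration.
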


\begin{proof}
Let us recall the explicit formula of $F_0$ (\ref{eq:F0}): when $\xi_d \neq 0$, 
\begin{equation*}
F_0(x, \xi) = \exp \left( -\int_0^{\tau_-(x, \xi)} \mu_t(x - r\xi)\,dr \right) f_0(x-\tau_-(x, \xi)\xi, \xi).
\end{equation*}
$\tau_-$ is continuous on $X$ with $\xi_d \neq 0$ because of its explicit formula:
\begin{equation*}
\tau_-(x, \xi) = 
\begin{cases}
x_d / \xi_d, \quad \xi_d > 0,\\
(x_d - 1) / \xi_d, \quad \xi_d < 0.
\end{cases}
\end{equation*}
Thus, we have when $\xi_d \neq 0$,  
\begin{equation*}
(x, \xi) \in disc(F_0) \Leftrightarrow (x - \tau_-(x, \xi)\xi, \xi) \in disc(f_0),
\end{equation*}
which implies that the statement holds when $\xi_d \neq 0$. 

Thus, only we have to check is continuity of $F_0$ at $(x, \xi) \in \Omega \times S^{d-1}$ with $\xi_d = 0$. In this setting, continuity of $F_0$ with respect to $x$ is obvious from the explicit formula of it (\ref{eq:F0}). So, we focus on continuity of $F_0$ with respect to $\xi$. Since 
\begin{equation*}
\lim_{\xi_d \rightarrow 0} \tau_-(x, \xi) = \infty
\end{equation*}
for all $x \in \Omega$ and since $f_0$ is bounded on $\Gamma_-$, we have 
\begin{align*}
\lim_{\xi_d \rightarrow 0} |F_0(x, \xi)| \leq& \left( \sup_{(x, \xi) \in \Gamma_-} | f_0 (x, \xi) | \right) \lim_{\xi_d \rightarrow 0} \exp \left( -\int_0^{\tau_-(x, \xi)} \mu_t(x - r\xi)\,dr \right)\\
\leq& \left( \sup_{(x, \xi) \in \Gamma_-} | f_0 (x, \xi) | \right) \lim_{\xi_d \rightarrow 0} \exp \left( - \underline{\mu_t} \tau_-(x, \xi) \right) = 0,
\end{align*}
which means $\lim_{\xi_d \rightarrow 0} F_0(x, \xi) = 0$ for all $x \in \Omega$. So $F_0$ is continuous at $(x, \xi) \in \Omega \times S^{d-1}$ with $\xi_d = 0$. This completes the proof.
\end{proof}

\subsection{Continuity of $F_1$}
Secondly, we prove continuity of $F_1$. To do so, we prove by induction that functions $f^{(n)}$, defined above, are bounded continuous on $X$ for all $n \geq 1$. After that, we know from Proposition \ref{prop:P2} that the sum $\sum_{n = 1}^\infty f^{(n)}$ converges uniformly on $X$, which implies that it is also bounded continuous on $X$.

\begin{Lem} \label{lem:f1}
Under the assumption in $Theorem \ref{thm:MR1}$, $f^{(1)}$ is bounded countinuous on $X$.
\end{Lem}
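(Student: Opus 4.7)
The plan is to reduce continuity of $f^{(1)}$ to an application of the dominated convergence theorem on the defining integral (\ref{eq:F1}) with $n = 0$; boundedness was already established in Proposition~\ref{prop:P1}. I would extend the $s$-integrand by zero for $s \geq \tau_-(x, \xi)$ and integrate over $s \in [0, \infty)$, which uniformly handles the dependence of the upper limit on $(x, \xi)$ as well as the case $\tau_-(x_0, \xi_0) = \infty$ when $\xi_{0, d} = 0$. The uniform majorant
\begin{equation*}
\|\mu_s\|_\infty\, e^{-\underline{\mu_t}\, s}\, \|p\|_\infty\, \sup_{(x, \xi) \in \Gamma_-} |f_0(x, \xi)|
\end{equation*}
is integrable on $[0, \infty) \times S^{d-1}$ because $\underline{\mu_t} > 0$ and $|S^{d-1}| < \infty$, so the whole question reduces to pointwise a.e.\ convergence of the integrand in $(s, \xi')$ as $(x_n, \xi_n) \to (x_0, \xi_0)$. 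Since $\mu_s$, $\mu_t$, and $p$ are continuous, the matter further reduces to continuity of $(x, \xi) \mapsto F_0(x - s\xi, \xi')$ at $(x_0, \xi_0)$ for a.e.\ $(s, \xi')$.

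Under condition~1 of Theorem~\ref{thm:MR1}, for a.e.\ $\xi' \in S^{d-1}$ the slice $f_0(\cdot, \xi')$ is continuous on $\Gamma_{-, \xi'}$. Combining this with the continuity of $\tau_-(\cdot, \xi')$ (for $\xi'_d \neq 0$) and of the attenuation factor yields continuity of $F_0(\cdot, \xi')$ on $\Omega$ for a.e.\ $\xi'$. Hence $F_0(x_n - s\xi_n, \xi') \to F_0(x_0 - s\xi_0, \xi')$ for a.e.\ $\xi'$ and every $s$, and DCT concludes.

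Under condition~2, I would change variables in the inner sphere integral from $\xi'$ to the entry point $x' \in \partial \Omega$ of the backward ray from $y := x - s\xi$. For fixed $y \in \Omega$, the maps $S^{d-1}_\pm \setminus \{\xi'_d = 0\} \ni \xi' \mapsto y - \tau_-(y, \xi')\xi'$ are diffeomorphisms onto $\{x'_d = 0\}$ and $\{x'_d = 1\}$ respectively, with Jacobian $|y_d - x'_d|/|y - x'|^d$. Setting $\eta(y, x') := (y - x')/|y - x'|$, the transformed integrand over $\partial \Omega$ takes the form
\begin{equation*}
p(y, \xi, \eta(y, x'))\, e^{-M_t(y, \eta(y, x'); |y - x'|)}\, f_0(x', \eta(y, x'))\, \frac{|y_d - x'_d|}{|y - x'|^d},
\end{equation*}
and the estimate $M_t(y, \eta; |y - x'|) \geq \underline{\mu_t}\, |y - x'|$ restores integrability over $\partial \Omega \simeq \mathbb{R}^{d-1} \sqcup \mathbb{R}^{d-1}$. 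For $x'$ outside the null set on which $f_0(x', \cdot)$ is discontinuous, the map $y \mapsto f_0(x', \eta(y, x'))$ is continuous, so DCT on the $x'$-integral gives continuity of the inner integral in $y$, and DCT on the outer $s$-integral finishes the argument.

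The principal obstacle is the treatment of condition~2: executing the change of variables cleanly, computing and controlling the Jacobian, handling the two components of $\partial \Omega$, and verifying integrability of the transformed majorant (secured by the exponential decay from $\underline{\mu_t} > 0$). An auxiliary technical point is the behavior at $\xi_{0, d} = 0$, where $\tau_-(x_0, \xi_0) = \infty$; the zero extension of the $s$-integrand for $s \geq \tau_-(x_n, \xi_n)$ combined with exponential decay in $s$ keeps DCT valid through the limit.
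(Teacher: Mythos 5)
Your proposal is correct and follows essentially the same route as the paper: the paper isolates the inner velocity integral as a function $G$ (Lemma \ref{lem:G1}), proves its continuity by dominated convergence directly under condition 1 and via the same change of variables $\xi' \mapsto x' = y - \tau_-(y,\xi')\xi'$ with Jacobian $x_d/|x-x'|^d$ under condition 2, and then applies dominated convergence once more to the zero-extended line integral with the majorant $\|\mu_s\|_\infty e^{-\underline{\mu_t}s}$. The only cosmetic differences are that you run the two dominated-convergence arguments jointly in $(s,\xi')$ under condition 1 and secure integrability of the transformed majorant via the exponential decay $e^{-\underline{\mu_t}|y-x'|}$ rather than the paper's local bound on the Jacobian.
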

\begin{proof}
Boundedness of $f^{(1)}$ was proved in Section 2, so here we prove continuity of it. By substituting the explicit formula of $f^{(0)}$ (\ref{eq:F0}) for one appeared in the recursive formula (\ref{eq:F1}) with $n = 0$, we have
\begin{align*}
f^{(1)}(x, \xi) =& \int_0^{\tau_-(x, \xi)} \mu_s(x - s\xi) \exp \Bigl(- M_t (x, \xi; s) \Bigr)\\
&\quad \times \int_{S^{d-1}} p(x - s\xi, \xi, \xi^\prime) f^{(0)}(x - s\xi, \xi^\prime)\,d\sigma_{\xi^\prime}ds\\
=& \int_0^{\tau_-(x, \xi)} \mu_s(x - s\xi) \exp \Bigl(- M_t (x, \xi; s) \Bigr) G(x - s\xi, \xi)\,ds,
\end{align*}
where
\begin{equation*}
G(x, \xi) = G_+(x, \xi) + G_-(x, \xi),
\end{equation*}
and
\begin{equation} \label{eq:G}
G_\pm (x, \xi) \!:=\! \int_{S^{d-1}_{\pm}} \! p(x, \xi, \xi^\prime) \! \exp \Bigl(- M_t \bigl(x, \xi^\prime; \tau_- (x, \xi^\prime) \bigr) \Bigr) \! f_0(x - \tau_-(x, \xi^\prime)\xi^\prime, \xi^\prime) \,d\sigma_{\xi^\prime}.
\end{equation}

Then, we introduce the following lemma, whose proof will be appeared later.

\begin{Lem} \label{lem:G1}
Under the assumption in Theorem $\ref{thm:MR1}$, $G$ is bounded continuous on $\Omega \times S^{d-1}$.
\end{Lem}

\begin{Rem}
Lemma \ref{lem:G1} is the very key idea in this paper. Aoki et al.~\cite{RefAoki} proved the same lemma when $p$ is constant, which led them to success. 
\end{Rem}

Admitting Lemma \ref{lem:G1}, we continue to prove Lemma \ref{lem:f1}. Let $\widetilde{G}$ be the zero extension of $G$ to $\mathbb{R}^d \times S^{d-1}$, that is, 
\begin{equation*}
\widetilde{G}(x, \xi) :=
\begin{cases}
G(x, \xi), &\quad (x, \xi) \in \Omega \times S^{d-1},\\
0, &\quad otherwise.
\end{cases}
\end{equation*}
Also let $\widetilde{\mu_t}$ be the zero extension of $\mu_t$ to $\mathbb{R}^d$ and let $\widetilde{M_t}$ be the corresponding $M_t$. Then, $f^{(1)}$ can be written as the following.
\begin{equation*}
f^{(1)}(x, \xi) = \int_0^\infty \widetilde{\mu_s}(x - s\xi) \exp \left(- \widetilde{M_t} (x, \xi; s) \right) \widetilde{G}(x - s\xi, \xi)\,ds,
\end{equation*}
Since the integrand is dominated by 
\begin{equation*}
\left( \sup_{x \in \Omega} \mu_s(x) \right) \left( \sup_{(x, \xi) \in \Omega \times S^{d-1}} | G(x, \xi) | \right) \exp(- \underline{\mu_t} s),
\end{equation*}
which is integrable with respect to $s$ on the half line $\lbrack 0, \infty)$, and since the integrand is continuous at each point $(x, \xi) \in \Omega \times S^{d-1}$ for almost all $s \in \lbrack 0, \infty)$, we apply the dominated convergence theorem to prove continuity of $f^{(1)}$ on $X$. 
\end{proof}

\begin{proof}[Proof of Lemma $\ref{lem:G1}$]
Since boundedness of $G_+$ and $G_-$ is obvious from their formulae (\ref{eq:G}), we focus on discussing continuity of them. At first, we fix a point $(\overline{x}, \overline{\xi}) \in \Omega \times S^{d-1}$ and prove continuity of $G_+$ at the point $(\overline{x}, \overline{\xi})$.

We first suppose that the boundary data $f_0$ satisfies condition 1. Since $\tau_-$ is continuous on $X$, the integrand 
\begin{equation*}
p(x, \xi, \xi^\prime) \exp \Bigl(- M_t \bigl(x, \xi^\prime; \tau_- (x, \xi^\prime) \bigr) \Bigr) f_0(x - \tau_-(x, \xi^\prime)\xi^\prime, \xi^\prime)
\end{equation*}
is continuous at $(\overline{x}, \overline{\xi}) \in \Omega \times S^{d-1}$ for almost all $\xi^\prime \in S^{d-1}_+$. Furthermore, the integrand is bounded by 
\begin{equation*}
\left( \sup_{(x, \xi, \xi^\prime)} p(x, \xi, \xi^\prime) \right) \left( \sup_{(x, \xi) \in \Gamma_-} | f_0 (x, \xi) | \right). 
\end{equation*}
Therefore, we can apply the dominated convergence theorem to conclude that $G_+$ is bounded continuous on $\Omega \times S^{d-1}$.

Next, we suppose that $f_0$ satisfies condition 2. By changing variable of integration $y_0 = (y_1, y_2, \ldots, y_{d-1}, 0) = x - \tau_- (x, \xi^\prime) \xi^\prime$, we have
\begin{align*}
G_+(x, \xi) =& \int_{\mathbb{R}^{d-1}} p \left(x, \xi, \frac{x - y_0}{|x - y_0|} \right) \exp \left(- M_t \left(x, \frac{x - y_0}{|x - y_0|}; |x - y_0| \right) \right)\\
&\quad \times f_0 \left(y_0, \frac{x - y_0}{|x - y_0|} \right) \frac{x_d}{|x - y_0|^d}\,dy_1dy_2 \cdots dy_{d-1},
\end{align*}
where $x_d / |x - y_0|^d$ is the Jacobian of this change.

By condition 2, for almost all $(y_1, y_2, \ldots, y_{d-1}) \in \mathbb{R}^{d-1}$, the integrand
\begin{equation*}
p \left(x, \xi, \frac{x - y_0}{|x - y_0|} \right) \! \exp \left(- M_t \left(x, \frac{x - y_0}{|x - y_0|}; |x - y_0| \right) \right) \! f_0 \left(y_0, \frac{x - y_0}{|x - y_0|} \right) \! \frac{x_d}{|x - y_0|^d}
\end{equation*}
is continuous at $(\overline{x}, \overline{\xi}) \in \Omega \times S^{d-1}$. Furthermore, the integrand is uniformly bounded by 
\begin{equation*}
\left( \left( \sup_{(x, \xi, \xi^\prime)} p(x, \xi, \xi^\prime) \right) \left( \sup_{(x, \xi) \in \Gamma_-} | f_0 (x, \xi) | \right) \right) / (|\tilde{x} - y_0| - \epsilon/2)^d
\end{equation*}
on the neighorhood $B_{\epsilon/2}(\overline{x}) \times S^{d-1}$, where $B_{\epsilon/2}(\overline{x})$ is the closed ball with centre $\overline{x}$ and radius $\epsilon/2$, and $\overline{x}_d > \epsilon >0$. Since the dominant is integrable with respect to $y_1$, $y_2$, $\ldots$, and $y_{d-1}$, we can apply the dominated convergence theorem to conclude that $G_+$ is bounded continuous at $(\overline{x}, \overline{\xi}) \in \Omega \times S^{d-1}$. 

Thus, it follows from the discussion above that $G_+$ is bounded continuous on $\Omega \times S^{d-1}$ if the boundary data $f_0$ satisfies the assumption in Theorem \ref{thm:MR1}. In the same way, we can show that $G_-$ is also bounded continuous on $\Omega \times S^{d-1}$. After all, $G$ itself is also bounded continuous on $\Omega \times S^{d-1}$. 
\end{proof}

\begin{Lem} \label{lem:fn}
Suppose that the function $f^{(n)}$, defined by recursive formulae $(\ref{eq:F0})$-$(\ref{eq:F1})$, is bounded continuous on $X$ for some $n \in \mathbb{N}$. Then, the successive function $f^{(n+1)}$ is also bounded continuous on $X$. 
\end{Lem}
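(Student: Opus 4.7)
The plan is to mimic the proof of Lemma \ref{lem:f1} closely, exploiting the crucial simplification that $f^{(n)}$ is itself bounded continuous by hypothesis. Boundedness of $f^{(n+1)}$ is already provided by the estimate (\ref{ineq:E}), so the entire task reduces to establishing continuity on $X$.

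First, I would introduce the scattering average
\[
H^{(n)}(x, \xi) := \int_{S^{d-1}} p(x, \xi, \xi^\prime)\, f^{(n)}(x, \xi^\prime)\,d\sigma_{\xi^\prime}
\]
and show that it is bounded continuous on $\Omega \times S^{d-1}$ by a direct application of the dominated convergence theorem: the integrand is continuous in $(x, \xi)$ for every fixed $\xi^\prime \in S^{d-1}$, and is uniformly bounded by $\|p\|_\infty \|f^{(n)}\|_\infty$ on the compact measure space $S^{d-1}$. This bypasses the difficulty addressed by Lemma \ref{lem:G1} in the proof of Lemma \ref{lem:f1}, because the inductive hypothesis has already promoted the relevant integrand to a continuous function, so no change of variables or splitting into the two boundary-data scenarios is needed.

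Next, I would rewrite the recursion (\ref{eq:F1}) as
\[
f^{(n+1)}(x, \xi) = \int_0^{\tau_-(x, \xi)} \mu_s(x - s\xi) \exp\bigl(- M_t (x, \xi; s)\bigr) H^{(n)}(x - s\xi, \xi)\,ds.
\]
Extending $\mu_s$, $\mu_t$, and $H^{(n)}$ by zero outside $\Omega$ (respectively outside $\Omega \times S^{d-1}$) and denoting the extensions with tildes, this becomes an integral over $[0, \infty)$, exactly as in the proof of Lemma \ref{lem:f1}. Because the slab $\Omega$ is convex, the backward ray $\{x - s\xi : s \geq 0\}$ meets $\Omega$ in a single interval, so whenever $\widetilde{\mu_s}(x - s\xi) \neq 0$ one has $\widetilde{M_t}(x, \xi; s) \geq \underline{\mu_t}\, s$. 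The integrand is then dominated by the integrable function $\|\mu_s\|_\infty \|H^{(n)}\|_\infty \exp(-\underline{\mu_t}\, s)$. Furthermore, for each $(x, \xi) \in X$ the path $\{x - s\xi\}$ crosses $\partial \Omega$ at most at the single value $s = \tau_-(x, \xi)$ (and never when $\xi_d = 0$, since then $x \in \Omega$ and the path remains horizontal inside the slab), so the integrand is continuous in $(x, \xi)$ for almost every $s$. A final application of the dominated convergence theorem then delivers continuity of $f^{(n+1)}$ on $X$.

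The main obstacle, as in Lemma \ref{lem:f1}, is the degenerate direction $\xi_d = 0$, where $\tau_-$ is infinite and the integration range is genuinely unbounded; the zero-extension device combined with the uniform exponential decay supplied by $\underline{\mu_t} > 0$ handles this cleanly and uniformly in $\xi$, so no separate treatment of $\xi_d = 0$ is required.
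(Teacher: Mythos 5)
Your proposal is correct and follows essentially the same route as the paper: zero-extend the data, rewrite the recursion as an integral over $[0,\infty)$, dominate the integrand by $\|\mu_s\|_\infty\|f^{(n)}\|_\infty\exp(-\underline{\mu_t}\,s)$, and conclude by the dominated convergence theorem. Factoring the inner scattering integral out as a separate continuous function $H^{(n)}$ is only a cosmetic reorganization of the paper's one-step argument, and your remark that the dominant is valid because $\widetilde{\mu_s}(x-s\xi)$ vanishes once the backward ray leaves the slab is a correct (and slightly more careful) justification of the bound the paper uses.
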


\begin{proof}
As the proof of Lemma \ref{lem:f1}, let $\widetilde{f}^{(n)}$ be the zero extention of $f^{(n)}$ to $\mathbb{R}^d \times S^{d-1}$. Also let $\widetilde{\mu_s}$ and $\widetilde{p}$ be the zero extensions of $\mu_s$ and $p$ to $\mathbb{R}^d$ and $\mathbb{R}^d \times S^{d-1} \times S^{d-1}$, respectively. Then, we have
\begin{multline*}
f^{(n+1)}(x, \xi) = \int_0^\infty \widetilde{\mu_s}(x - s\xi) \exp \left(- \widetilde{M_t} (x, \xi; s) \right)\\
\times \int_{S^{d-1}} \widetilde{p}(x - s\xi, \xi, \xi^\prime) \widetilde{f}^{(n)}(x - s\xi, \xi^\prime)\,d\sigma_{\xi^\prime}ds
\end{multline*}
for all $(x, \xi) \in X$. Since $f^{(n)}$ is continuous on $X$, the integrand 
\begin{equation*}
\widetilde{\mu_s}(x - s\xi) \exp \left(- \widetilde{M_t} (x, \xi; s) \right) \int_{S^{d-1}} \widetilde{p}(x - s\xi, \xi, \xi^\prime) \widetilde{f}^{(n)}(x - s\xi, \xi^\prime)\,d\sigma_{\xi^\prime}
\end{equation*}
is also continuous at each point $(x, \xi) \in X$ for almost all $s \in \lbrack 0, \infty)$. In addition, this integrand is dominated by 
\begin{equation*}
\left( \sup_{x \in \Omega} \mu_s(x) \right) \left( \sup_{(x, \xi) \in X} | f^{(n)} (x, \xi) | \right) \exp \left(- \underline{\mu_t} s \right),
\end{equation*}
which is integrable with respect to $s$. Thus, we can apply Lebesgue's convergence theorem to prove continuity of $f^{(n+1)}$ on $X$.
\end{proof}

By Lemma \ref{lem:f1} and Lemma \ref{lem:fn}, $f^{(n)}$ is bounded countinuous on $X$ for all $n \in \mathbb{N}$, and by Proposition \ref{prop:P2}, the sum $\sum_{n=1}^\infty f^{(n)}(x, \xi)$ converges uniformly on $X$, which implies that the function $F_1$ is also bounded continuous on $X$. 
 
Thus, we succeed to separate the solution into two parts, the discontinuous part $F_0$ and the continuity part $F_1$.

\section{Equivalence between the boundary value problem and derived integral equations}
In this section, we check the equivalence between the boundary value problem (\ref{eq:STE})-(\ref{eq:BC}) and integral equations (\ref{eq:IE})-(\ref{eq:IE2}). As we noted in Section 2, although solutions to the boundary value problem (\ref{eq:STE})-(\ref{eq:BC}) satisfy integral equations (\ref{eq:IE})-(\ref{eq:IE2}), the converse does not hold in general. So, we see that, under the assumption in Theorem 1, the solution to integral equations (\ref{eq:IE})-(\ref{eq:IE2}) is indeed the solution to the boundary value problem (\ref{eq:STE})-(\ref{eq:BC}). In other words, we prove the following two propositions.

\begin{Prop} \label{prop:STE}
Let $f$ be the solution to integral equations $(\ref{eq:IE})$-$(\ref{eq:IE2})$. Suppose that the boundary data $f_0$ satisfies the assumption in $Theorem \ref{thm:MR1}$. Then, the directional derivative $\xi \cdot \nabla_x f(x, \xi)$ is defined for all $(x, \xi) \in \Omega \times S^{d-1}$ and $f$ satisfies the stationary transport equation $(\ref{eq:STE})$.
\end{Prop}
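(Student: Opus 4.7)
The plan is to show that along any characteristic line through a point $(x, \xi) \in \Omega \times S^{d-1}$, the solution $f$ admits a representation obtained by rewriting the integral equation at the shifted point $x + t\xi$, and that this representation is differentiable in $t$ at $t = 0$ with derivative matching the right-hand side of the stationary transport equation.

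Fix $(x, \xi) \in \Omega \times S^{d-1}$ and choose $\delta > 0$ so small that $x + t\xi \in \Omega$ for all $|t| < \delta$. In this neighborhood the identity $\tau_-(x + t\xi, \xi) = \tau_-(x, \xi) + t$ holds (with the convention $\infty + t = \infty$ in the case $\xi_d = 0$). Substituting $y = x + t\xi$ into (\ref{eq:IE}) or (\ref{eq:IE2}) and performing the change of variable $u = s - t$ in the outer integration, I would arrive at the key identity
\begin{equation*}
f(x + t\xi, \xi) = e^{-E(t)} f(x, \xi) + e^{-E(t)} \int_{-t}^{0} \mu_s(x - u\xi) \exp \Bigl( - M_t(x, \xi; u) \Bigr) H(x - u\xi, \xi) \, du,
\end{equation*}
where
\begin{equation*}
E(t) := \int_{0}^{t} \mu_t(x + r\xi) \, dr, \qquad H(y, \xi) := \int_{S^{d-1}} p(y, \xi, \xi^\prime) f(y, \xi^\prime) \, d\sigma_{\xi^\prime}.
\end{equation*}
Both cases $\xi_d \neq 0$ and $\xi_d = 0$ yield the same formula, because the portion of the transformed integral over $u \in (0, \tau_-(x, \xi))$ (respectively $u \in (0, \infty)$) combines with the boundary term to reassemble exactly $f(x, \xi)$ via the original integral equation at $x$.

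The crucial regularity input is that $H(\cdot, \xi)$ is continuous at $x$, and this is where the hypothesis of Theorem \ref{thm:MR1} enters. Using the decomposition $f = F_0 + F_1$ from Section 3, I would write $H = G + H_1$, where $G$ is precisely the object already shown to be bounded continuous on $\Omega \times S^{d-1}$ in Lemma \ref{lem:G1}, and
\begin{equation*}
H_1(y, \xi) := \int_{S^{d-1}} p(y, \xi, \xi^\prime) F_1(y, \xi^\prime) \, d\sigma_{\xi^\prime}
\end{equation*}
is continuous on $X$ by the dominated convergence theorem, using continuity and boundedness of $p$ together with continuity of $F_1$ established in Section 3.

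Finally, I would differentiate the key identity at $t = 0$. Since $E(0) = 0$, $E^\prime(0) = \mu_t(x)$, and the integrand $u \mapsto \mu_s(x - u\xi) \exp(-M_t(x, \xi; u)) H(x - u\xi, \xi)$ is continuous at $u = 0$ with value $\mu_s(x) H(x, \xi)$, the one-sided limits
\begin{equation*}
\lim_{t \to 0^\pm} \frac{1}{t} \int_{-t}^{0} g(u) \, du = g(0)
\end{equation*}
(valid for any $g$ continuous at $0$) combine to give
\begin{equation*}
\xi \cdot \nabla_x f(x, \xi) = -\mu_t(x) f(x, \xi) + \mu_s(x) H(x, \xi),
\end{equation*}
which is precisely (\ref{eq:STE}). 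The main obstacle is the continuity of $H$: although $f$ itself is discontinuous along the set described in Proposition \ref{prop:F0}, its angular average $H$ must nevertheless be genuinely continuous in the spatial variable, and this rests on Lemma \ref{lem:G1}, which in turn is where the two conditions of Theorem \ref{thm:MR1} are used.
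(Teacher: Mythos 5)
Your proof is correct, but it follows a genuinely different route from the paper's. The paper differentiates the Neumann series term by term: it establishes $\xi \cdot \nabla_x f^{(0)} = -\mu_t f^{(0)}$ directly, computes $\xi \cdot \nabla_x f^{(1)}$ using the continuity of $G$ from Lemma \ref{lem:G1}, obtains the analogous identity for each $f^{(n+1)}$ from the continuity of $f^{(n)}$ (Lemma \ref{lem:fn}), and then invokes the uniform convergence of Proposition \ref{prop:P2} to justify summing the derivatives. You instead differentiate the closed integral equation once: your key identity $f(x+t\xi,\xi) = e^{-E(t)}\bigl( f(x,\xi) + \int_{-t}^{0}\mu_s(x-u\xi)\exp(-M_t(x,\xi;u))H(x-u\xi,\xi)\,du \bigr)$ is a correct consequence of the translation identities $\tau_-(x+t\xi,\xi)=\tau_-(x,\xi)+t$ and $M_t(x+t\xi,\xi;s)=M_t(x,\xi;s-t)+E(t)$, and it reduces everything to the continuity of the scattering average $H$ at the base point. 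Your decomposition $H = G + H_1$ is exactly right: $G$ is the angular average of $F_0 = f^{(0)}$ and is continuous by Lemma \ref{lem:G1}, while $H_1$ inherits continuity from $F_1$ via dominated convergence. Both arguments therefore rest on the same two ingredients (Lemma \ref{lem:G1} and the continuity of $F_1$ from Section 3), but yours buys a cleaner justification of differentiability --- a single application of the fundamental theorem of calculus at an interior point --- at the price of first assembling the continuity of the full average $H$, whereas the paper's termwise route distributes that work across the recursion and must then justify interchanging $\xi\cdot\nabla_x$ with the infinite sum. One could even argue your version makes more transparent \emph{why} the hypotheses of Theorem \ref{thm:MR1} are needed: they are used exactly once, to guarantee that the angular average of the discontinuous part is continuous.
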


\begin{Prop} \label{prop:BC}
The solution $f$ to integral equations $(\ref{eq:IE})$-$(\ref{eq:IE2})$ satisfies the boundary condition $(\ref{eq:BC})$.
\end{Prop}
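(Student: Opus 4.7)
The plan is to evaluate the integral equation (\ref{eq:IE}) directly at points of $\Gamma_-$ and observe that it collapses to the boundary condition. First I would note that every $(x,\xi)\in\Gamma_-$ has $\xi_d\neq0$. Indeed, in the slab $\mathbb{R}^{d-1}\times(0,1)$ the outer normal is $-e_d$ on $\{x_d=0\}$ and $+e_d$ on $\{x_d=1\}$, so the condition $n(x)\cdot\xi<0$ forces $\xi_d>0$ in the first case and $\xi_d<0$ in the second. Consequently the relevant integral equation at such points is (\ref{eq:IE}) rather than (\ref{eq:IE2}).

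Next, I would invoke the explicit formula for $\tau_-$ recalled in the proof of Proposition \ref{prop:F0}: if $\xi_d>0$ then $\tau_-(x,\xi)=x_d/\xi_d$, and if $\xi_d<0$ then $\tau_-(x,\xi)=(x_d-1)/\xi_d$. For $(x,\xi)\in\Gamma_-$ we have $x_d=0$ in the first case and $x_d=1$ in the second, so in both cases $\tau_-(x,\xi)=0$. Geometrically, moving backwards along $-\xi$ from a boundary point with inward-pointing $\xi$ leaves $\Omega$ immediately.

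With $\tau_-(x,\xi)=0$, the exponential factor in (\ref{eq:IE}) becomes $\exp(-M_t(x,\xi;0))=1$, the argument of $f_0$ reduces to $x-0\cdot\xi=x$, and the integral term $\int_0^0(\cdots)\,ds$ vanishes. Substituting into (\ref{eq:IE}) therefore yields $f(x,\xi)=f_0(x,\xi)$, which is precisely the boundary condition (\ref{eq:BC}). Equivalently, in the series representation $f=\sum_{n\ge0}f^{(n)}$, one sees that $f^{(0)}(x,\xi)=f_0(x,\xi)$ and $f^{(n)}(x,\xi)=0$ for every $n\ge1$ whenever $\tau_-(x,\xi)=0$.

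I do not anticipate any substantive obstacle; the proposition is essentially a consistency check that the integral formulation was derived compatibly with the boundary data. The only thing worth stating carefully is the observation $\tau_-|_{\Gamma_-}\equiv0$, after which the result is immediate from the structure of (\ref{eq:IE}).
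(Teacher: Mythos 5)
Your proposal is correct and is essentially the paper's own argument: the paper simply records that on $\Gamma_-$ one has $f^{(0)}(x,\xi)=f_0(x,\xi)$ and $f^{(n)}(x,\xi)=0$ for $n\ge1$, which is exactly your observation that $\tau_-\equiv0$ on $\Gamma_-$ makes the exponential factor equal to $1$ and the integral term vanish. You merely spell out the justification in more detail than the paper does.
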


\begin{proof}[Proof of Proposition $\ref{prop:STE}$]
Making use of the following equality
\begin{equation*}
\tau_- (x + t\xi, \xi) = \tau_- (x, \xi) + t
\end{equation*}
for $(x, \xi) \in \Omega \times S^{d-1}$ and $t \in \mathbb{R}$ such that $x + t\xi \in \Omega$, we have
\begin{equation*}
f^{(0)}(x + t\xi, \xi) =
\begin{cases}
\exp \Bigl( - M_t \bigl(x + t\xi, \xi; \tau_- (x, \xi) + t \bigr) \Bigr) \! f_0(x - \tau_-(x, \xi)\xi, \xi), \!\!\!\! &\xi_d \neq 0, \\
0, &\xi_d = 0.
\end{cases}
\end{equation*}
When $\xi_d \neq 0$, since 
\begin{equation*}
M_t \bigl(x + t\xi, \xi; \tau_- (x, \xi) + t \bigr) = M_t \bigl(x, \xi; \tau_- (x, \xi) \bigr) - M_t \bigl(x, \xi; -t \bigr),
\end{equation*}
we have
\begin{align*}
\xi \cdot \nabla_x f^{(0)}(x, \xi) \! =& \frac{dM_t}{dt}(x, \xi; -t)|_{t=0} \exp \Bigl(- M_t \bigl(x, \xi; \tau_- (x, \xi) \bigr) \Bigr) \! f_0(x - \tau_-(x, \xi)\xi, \xi)\\
=& -\mu_t(x) \exp \Bigl(- M_t \bigl(x, \xi; \tau_- (x, \xi) \bigr) \Bigr) f_0(x - \tau_-(x, \xi)\xi, \xi)\\
=& -\mu_t(x) f^{(0)}(x, \xi).
\end{align*}
When $\xi_d = 0$, it is obvious that $\xi \cdot \nabla_x f^{(0)}(x, \xi) = 0 = -\mu_t(x) f^{(0)}(x, \xi)$.

Thus, in both cases, we have 
\begin{equation} \label{eq:df0}
\xi \cdot \nabla_x f^{(0)}(x, \xi) = -\mu_t(x) f^{(0)}(x, \xi)
\end{equation}
for all $(x, \xi) \in \Omega \times S^{d-1}$.

Since not only $f^{(1)}$ but also $G$, appeared in the proof of Lemma \ref{lem:f1}, is bounded continuous on $\Omega \times S^{d-1}$, the function
\begin{align*}
f^{(1)}(x + t\xi, \xi) =& \int_0^{\tau_-(x + t\xi, \xi)} \mu_s(x + t\xi - s\xi) \\
&\quad \times \exp \Bigl(- M_t (x + t\xi, \xi; s) \Bigr) G(x + t\xi - s\xi, \xi)\,ds\\
=& \int_{-t}^{\tau_-(x, \xi)} \mu_s(x - s\xi)\\
&\quad \times \exp \Bigl( M_t (x, \xi; -t) - M_t (x, \xi; s) \Bigr) G(x - s\xi, \xi)\,ds
\end{align*} 
is differentiable with respect to $t$ at $t = 0$ for all $(x, \xi) \in \Omega \times S^{d-1}$ and 
\begin{align} \label{eq:df1}
\xi \cdot \nabla_x f^{(1)}(x, \xi) =& \mu_s(x, \xi) G(x, \xi) \nonumber\\
& -\mu_t(x)  \int_0^{\tau_-(x, \xi)} \mu_s(x - s\xi) \exp \Bigl( - M_t (x, \xi; s) \Bigr) G(x - s\xi, \xi)\,ds \nonumber\\
=& \mu_s(x) \int_{S^{d-1}} p(x, \xi, \xi^\prime) f^{(0)}(x, \xi^\prime)\,d\sigma_{\xi^\prime} -\mu_t(x) f^{(1)}(x, \xi) 
\end{align}
for all $(x, \xi) \in \Omega \times S^{d-1}$.

Since the functions $f^{(n)}$ are bouded continuous on $X$ for all $n \in \mathbb{N}$ by Lemma \ref{lem:fn}, the following relation holds from the direct calculation:
\begin{equation} \label{eq:df2}
\xi \cdot \nabla_x f^{(n+1)}(x, \xi) = \mu_s(x) \int_{S^{d-1}} p(x, \xi, \xi^\prime) f^{(n)}(x, \xi^\prime)\,d\sigma_{\xi^\prime} -\mu_t(x) f^{(n+1)}(x, \xi)
\end{equation}
for all $(x, \xi) \in \Omega \times S^{d-1}$ and for all $n \in \mathbb{N}$. 

By Proposition \ref{prop:P2}, we can sum up right hand sides of (\ref{eq:df0}), (\ref{eq:df1}), and (\ref{eq:df2}) to obtain 
\begin{align*}
&\xi \cdot \nabla_x f(x, \xi) = \xi \cdot \nabla_x \sum_{n = 0}^\infty f^{(n)}(x, \xi) = \sum_{n = 0}^\infty \xi \cdot \nabla_x f^{(n)}(x, \xi)\\
=& \sum_{n = 0}^\infty \mu_s(x) \int_{S^{d-1}} p(x, \xi, \xi^\prime) f^{(n)}(x, \xi^\prime)\,d\sigma_{\xi^\prime} - \sum_{n = 0}^\infty \mu_t(x) f^{(n)}(x, \xi)\\
=& \mu_s(x) \int_{S^{d-1}} p(x, \xi, \xi^\prime)  \sum_{n = 0}^\infty f^{(n)}(x, \xi^\prime)\,d\sigma_{\xi^\prime} - \mu_t(x) \sum_{n = 0}^\infty f^{(n)}(x, \xi)\\
=& \mu_s(x) \int_{S^{d-1}} p(x, \xi, \xi^\prime) f (x, \xi^\prime)\,d\sigma_{\xi^\prime} - \mu_t(x) f (x, \xi).
\end{align*}
for all $(x, \xi) \in \Omega \times S^{d-1}$, which is the stationary transport equation (\ref{eq:STE}) itself. Thus, the directional derivative $\xi \cdot \nabla_x f(x, \xi)$ is defined for all $(x, \xi) \in \Omega \times S^{d-1}$ by termwise diffrentiation and the original function $f$ satisfies the stationary transport equation (\ref{eq:STE}).
\end{proof}

\begin{proof}[Proof of Proposition $\ref{prop:BC}$]
For all $(x, \xi) \in \Gamma_-$, 
\begin{equation*}
f^{(n)}(x, \xi) = 
\begin{cases}
f_0(x, \xi), \quad n = 0, \\
0, \quad n \geq 1.
\end{cases}
\end{equation*}
Therefore,
\begin{equation*}
f(x, \xi) = \sum_{n=0}^\infty f^{(n)}(x, \xi) = f_0(x, \xi)
\end{equation*}
for all $(x, \xi) \in \Gamma_-$.
\end{proof}

From proposition 5 and Proposition 6, it follows that the boundary value problem  (\ref{eq:STE})-(\ref{eq:BC}) and integral equations (\ref{eq:IE})-(\ref{eq:IE2}) are equivalent in this setting.

\section{Example for nonequivalence between the stationary transport equation and derived integral equations}
In this section, we introduce an example in two dimensional case which shows that piecewise continuity of the boundary data is not a sufficient condition for the main result. Let $d = 2$ and fix $\overline{x} = (\overline{x_1}, \overline{x_2}) \in \Omega$. We introduce the polar coordinate to $S^1$: 
\begin{equation*}
\xi (\theta) = (\cos \theta, \sin \theta), \quad \theta \in \lbrack 0, 2\pi).
\end{equation*}
We note that, by this coordinate, $S^1_+$ and $S^1_-$ are identified with intervals $(0, \pi)$ and $(\pi, 2\pi)$, respectively. We introduce pieces of $\Gamma_-$ by
\begin{align*}
\Gamma_{-, 1} &:= \{(x, \xi (\theta)) \in \Gamma_- | x_2 = 0, x_1 \geq \overline{x_1} - \overline{x_2} \cot \theta, \theta \in (0, \pi)\}, \\
\Gamma_{-, 2} &:= \{(x, \xi (\theta)) \in \Gamma_- | x_2 = 0, x_1 < \overline{x_1} - \overline{x_2} \cot \theta, \theta \in (0, \pi)\}, \\
\Gamma_{-, 3} &:= \{(x, \xi (\theta) ) \in \Gamma_- | x_2 = 1, \theta \in (\pi, 2\pi)\}.
\end{align*}
We note that $\Gamma_- = \Gamma_{-, 1} \cup \Gamma_{-, 2} \cup \Gamma_{-, 3}$ and
$\Gamma_{-, 1} \cap \Gamma_{-, 2} = \Gamma_{-, 2} \cap \Gamma_{-, 3} = \Gamma_{-, 3} \cap \Gamma_{-, 1} = \emptyset$. 

We take the boundary data $f_0$ as follows:
\begin{equation*}
f_0 (x, \xi) := 
\begin{cases}
1, \quad (x, \xi) \in \Gamma_{-, 1}, \\
0, \quad (x, \xi) \in \Gamma_{-, 2} \cup \Gamma_{-, 3}.
\end{cases}
\end{equation*}
The boundary data $f_0$ is obviously bounded and constant on each $\Gamma_{-, i}$, $i = 1, 2, 3$, which implies that the boundary data $f_0$ is indeed piecewise continuous. 

With this boundary data $f_0$, we define a family of functions $\{ f^{(n)} \}_{n \geq 0}$ on $X$ by recursive formulae (\ref{eq:F0})-(\ref{eq:F1}). Through the same discussion in Section 2, we see that the sum $f = \sum_{n=0}^\infty f^{(n)}$ is still the unique solution to the integral equations (\ref{eq:IE})-(\ref{eq:IE2}). However, the directional derivative $\xi \cdot \nabla_x f(x, \xi)$ of this sum $f$ is not defined at $(\overline{x}, \xi)$ for all $\xi \in S^1$.

In this setting, the function $G_-$, defined by the formula (\ref{eq:G}) in Section 3, is identically zero, which implies that $G_-$ is continuous in $\Omega \times S^1$, while $G_+$, also defined by the formula (\ref{eq:G}) in Section 3, is discontinuous with respect to $x$ at $(\overline{x}, \xi)$ for all $\xi \in S^1$. Thus, $G = G_+ + G_-$ is also discontinuous with respect to $x$ at $(\overline{x}, \xi)$ for all $\xi \in S^1$. Although $G$ is discontinuous with respect to $x$ at one point, $f^{(1)}$ is continuous on $X$. This implies that the directional derivative $\xi \cdot \nabla_x f^{(n)}(x, \xi)$ is defined for all $(x, \xi) \in \Omega \times S^1$ and for all $n \geq 2$, whereas $\xi \cdot \nabla_x f^{(1)}(x, \xi)$ is not defined at $(\overline{x}, \xi) \in \Omega \times S^1$. Since $\xi \cdot \nabla_x f^{(0)}(x, \xi)$ is defined for all $(x, \xi) \in \Omega \times S^1$, we have to conclude that $\xi \cdot \nabla_x \sum_{n=0}^\infty f^{(n)}(x, \xi)$ is not defined at $(\overline{x}, \xi) \in \Omega \times S^1$, which means that the sum $f(x, \xi) = \sum_{n=0}^\infty f^{(n)}(x, \xi)$ is not a solution to the boundary value problem (\ref{eq:STE})-(\ref{eq:BC}). This conclusion implies that this boundary value problem (\ref{eq:STE})-(\ref{eq:BC}) has no solution.

\section*{Acknowledgements}
The authors would like to thank Kazuo Aoki for suggesting this problem. The second author is supported in part by JSPS KAKENHI grant number 15K17572.

\end{document}